\newcommand{\nprogram}{K}
\newcommand{\nqpu}{N}
\newcommand{\nqubit}{n}
\begin{document}

\title{On Reducing the Execution Latency of Superconducting Quantum Processors via Quantum Job Scheduling}

\author{Wenjie Wu}
\affiliation{%
  \institution{Shanghai Jiao Tong University}
  \city{Shanghai}
  \country{China}
}
\email{wenjiewu@sjtu.edu.cn}

\author{Yiquan Wang}
\affiliation{
  \institution{Shanghai Jiao Tong University}
  \city{Shanghai}
  \country{China}
}
\email{abcdfehg@sjtu.edu.cn}

\author{Ge Yan}
\affiliation{
  \institution{Shanghai Jiao Tong University}
  \city{Shanghai}
  \country{China}
}
\email{yange98@sjtu.edu.cn}

\author{Yuming Zhao}
\affiliation{
  \institution{Shanghai Jiao Tong University}
  \city{Shanghai}
  \country{China}
}
\email{arola_zym@sjtu.edu.cn}

\author{Bo Zhang}
\affiliation{
  \institution{Shanghai AI Laboratory}
  \city{Shanghai}
  \country{China}
}
\email{bo.zhangzx@gmail.com}

\author{Junchi Yan}
\authornote{Corresponding author. The work was partly supported by NSFC (92370201) and QuantumCtek Quantum Cloud Services.}
\affiliation{
  \institution{Shanghai Jiao Tong University}
  \city{Shanghai}
  \country{China}
}
\email{yanjunchi@sjtu.edu.cn}

\renewcommand{\shortauthors}{Wu W, Wang Y, Yan G, et al.}

\begin{abstract}
Quantum computing has gained considerable attention, especially after the arrival of the Noisy Intermediate-Scale Quantum (NISQ) era. Quantum processors and cloud services have been made world-wide increasingly available. Unfortunately, jobs on existing quantum processors are often executed in series, and the workload could be heavy to the processor. Typically, one has to wait for hours or even longer to obtain the result of a single quantum job on public quantum cloud due to long queue time. In fact, as the scale grows, the qubit utilization rate of the serial execution mode will further diminish, causing the waste of quantum resources. In this paper, to our best knowledge for the first time, the Quantum Job Scheduling Problem (QJSP) is formulated and introduced, and we accordingly aim to improve the utility efficiency of quantum resources. Specifically, a noise-aware quantum job scheduler (NAQJS) concerning the circuit width, number of measurement shots, and submission time of quantum jobs is proposed to reduce the execution latency. We conduct extensive experiments on a simulated Qiskit noise model, as well as on the Xiaohong (from QuantumCTek) superconducting quantum processor. Numerical results show the effectiveness in both the QPU time and turnaround time.  
\end{abstract}

\begin{CCSXML}
<ccs2012>
   <concept>
       <concept_id>10010583.10010786.10010813.10011726</concept_id>
       <concept_desc>Hardware~Quantum computation</concept_desc>
       <concept_significance>500</concept_significance>
       </concept>
 </ccs2012>
\end{CCSXML}

\ccsdesc[500]{Hardware~Quantum computation}
\keywords{Quantum Computing, Quantum Job Scheduling, Quantum Cloud}


\maketitle
\section{Introduction}
In recent decades, considerable progress has been made in quantum computing (QC). Shor's algorithm \cite{shor1994algorithms} achieves exponential acceleration for factor decomposition, and Grover's algorithm \cite{grover1996fast} provides quadratic speedup for unstructured search over classical counterparts. Recently, the development of quantum computers and methods has led us into the so-called Noisy Intermediate-Scale Quantum (NISQ) era \cite{preskill2018quantum}, with some evidence on the so-called quantum supremacy, e.g. Google's superconducting quantum processor Sycamore~\cite{arute2019quantum}. The potential advantage of QC over classical computing are attracting increasing attention. 


More and more players like IBM have provided the public access to their quantum computers. This facilitates the validation of quantum algorithms on NISQ devices over the Internet. For example, we have free access to the 7-qubit IBM Perth \cite{ibmquantum}. However, running quantum circuits on current quantum computers is non-trivial due to the noise and sparse connectivity of physical qubits. On a NISQ device, the physical qubits are not fully connected. The deployment of two-qubit gates is restricted to pairs of connected qubits. Hence, when mapping logical qubits to their physical counterparts, certain two-qubit gates may be positioned on physically disconnected qubits, rendering them inexecutable. Conventionally, SWAP gates are inserted to change the qubit mapping so that every two-qubit gate can be physically executed. Since SWAP gates result in extra noise, the number of them is expected to be minimized. 

A more awkward obstacle hindering people from using quantum computers is the unbearably long queue time. Though there exist some quantum cloud services, the growing need for quantum hardware outpaces the open access to quantum hardware. To verify this, we submit 20 jobs to IBM Perth within a week. According to the panel, the average number of pending jobs when submitting is about 2,540, and the average queue time before execution is about 6.7 hours. The latency of circuit execution is unacceptable, especially when we run Variational Quantum Algorithms (VQAs) \cite{cerezo2021variational}, in which plenty of circuits are executed in a single episode to update the parameters. The main reason for this latency is that the submitted quantum jobs are executed in series. Thus, only one job is executed on the quantum processor in each execution. Besides, entangling a large number of qubits on NISQ devices is challenging due to the noise \cite{cao2023generation}, so most circuits remain small in width to ensure high fidelity. Hence, the qubit utilization rate is low. With the increasing number of physical qubits on QPUs and decreasing error rate, we may execute multiple jobs in parallel in each execution, i.e. quantum multi-programming (QMP), at a negligible cost of fidelity to reduce the latency. Such parallel running mode can be extended to applications like quantum architecture search \cite{wu2023quantumdarts,lu2023qas}, quantum inner product \cite{Xiong_2024_CVPR}, and network alignment \cite{ye2023vqne}. As a result, more people can access quantum resources to facilitate QC. 
\begin{figure}[t]
\includegraphics[width=\linewidth]{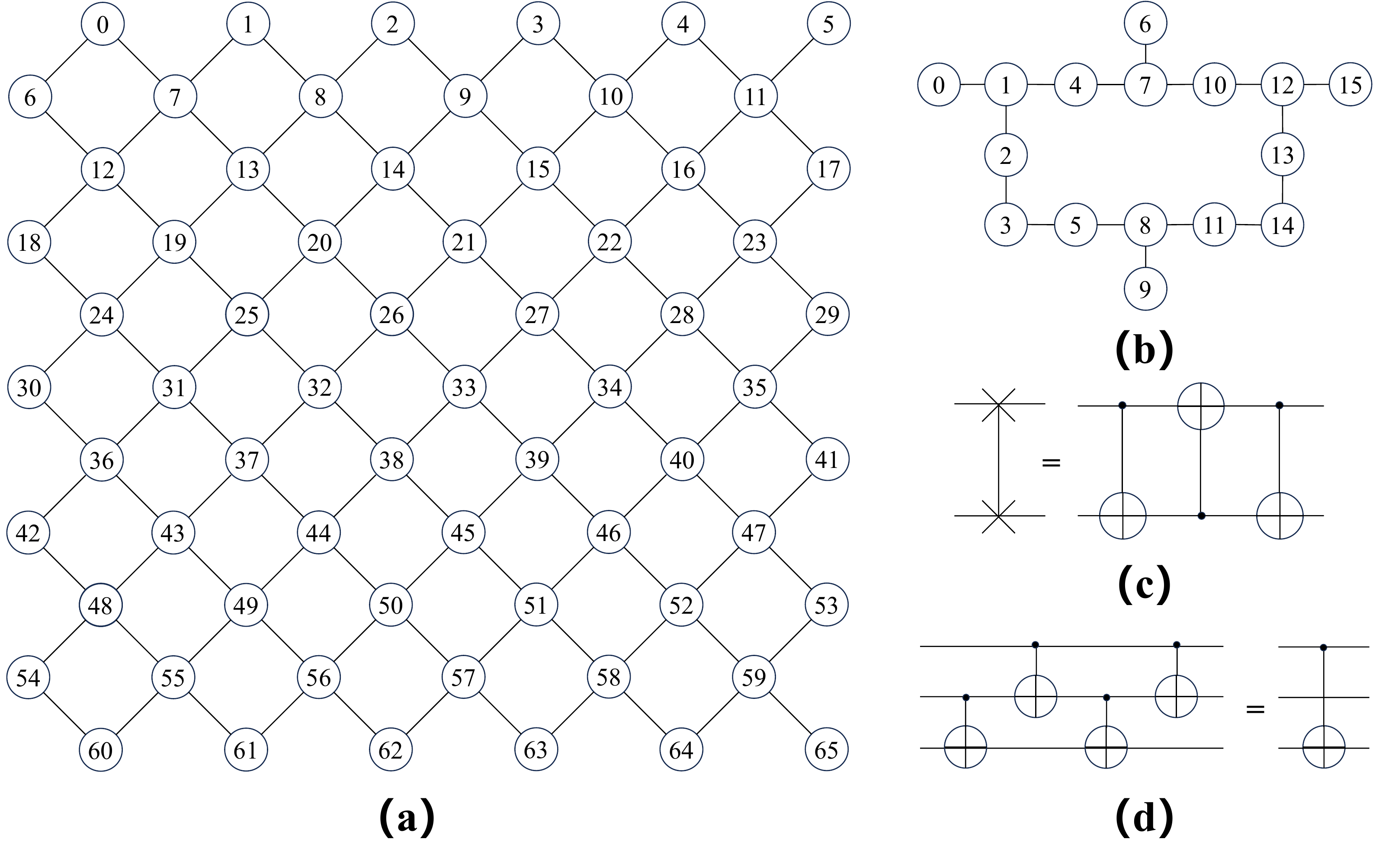}
\caption{(a) Coupling graph of Xiaohong quantum processor (from QuantumCTek as used in this paper for experiments). (b) Coupling graph of IBM Guadalupe. (c) SWAP gate. (d) BRIDGE gate. SWAP and BRIDGE gates can solve the connectivity constraints on coupling graphs.}\label{fig:qpu}\vspace{-5pt}
\end{figure}

\begin{figure*}
\centering
	\begin{subfigure}[b]{0.09\linewidth}
	\includegraphics[width=\linewidth]{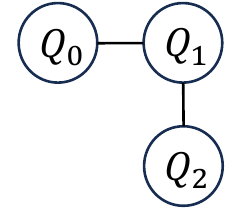}
	\caption{}
\label{fig:subgraph}
	\end{subfigure}
	\begin{subfigure}[b]{0.23\linewidth}
    \includegraphics[width=\linewidth]{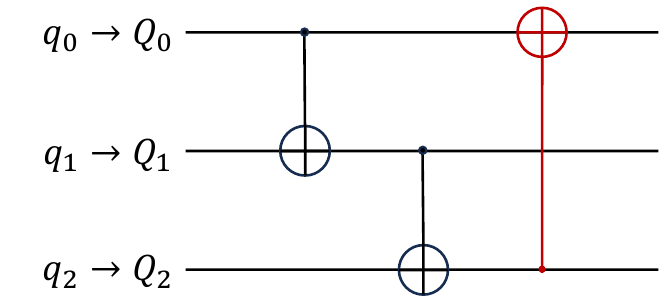}
	\caption{}
\label{fig:violate}
\end{subfigure}  
	\begin{subfigure}[b]{0.33\linewidth}
	\includegraphics[width=\linewidth]{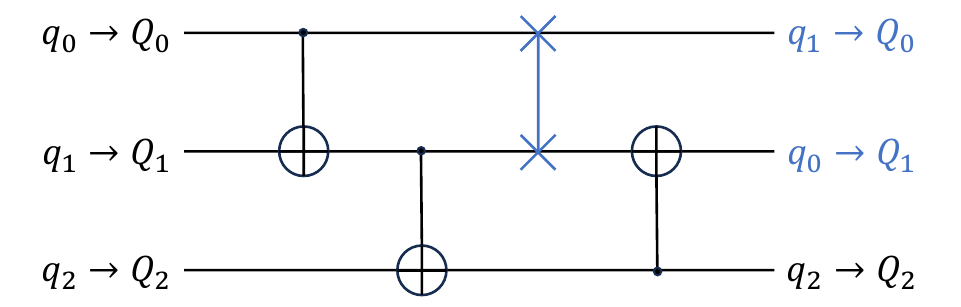}
	\caption{}
\label{fig:qmap_swap}
	\end{subfigure}
	\begin{subfigure}[b]{0.33\linewidth}
    \includegraphics[width=\linewidth]{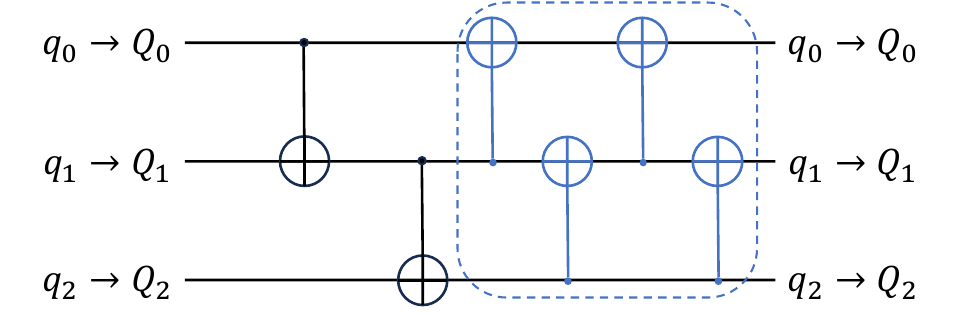}
	\caption{}
\label{fig:qmap_bridge}
\end{subfigure}
\vspace{-10pt}
\caption{An example of qubit mapping. (a) Subgraph derived by qubit partitioning. (b) Quantum circuit to be mapped. (The CNOT gate in red cannot be applied, because $Q_0$ and $Q_2$ are not connected.) (c) Mapped circuit through SWAP gates. (d) Mapped circuit through BRIDGE gates. The SWAP gate changes the mapping in (c) (marked in blue).}\label{fig:qmap}
\vspace{-10pt}
\end{figure*}

However, QMP on quantum processors is a complicated task. The execution order of circuits will affect its performance. Different from classical process scheduling, we need to consider fidelity apart from time metrics. The QPU should be partitioned in a fair manner to reduce fidelity drop. Unfortunately, fidelity and time metrics often conflict with each other. In this paper, we introduce the Quantum Job Scheduling Problem, which has great practical value in the NISQ era. A novel scheduling method is proposed to tackle this problem. With our priority score and noise-aware initial mapping, our method surpasses baselines in time metrics, and guarantees the fairness and fidelity. \textbf{The contributions of this paper are:}

1) We formulate the Quantum Job Scheduling Problem of reducing the latency of (superconducting) quantum processors, fully utilizing the computational power of quantum processors.

2) We propose a novel noise-aware quantum job scheduler to balance time metrics, fidelity, and fairness. The small overhead caused by our method can be neglected.

3) Experimental results on both the noise model and real-world quantum computer show that our approach significantly reduces the QPU time and turnaround time at a low cost of fidelity. 
\begin{figure}
    \vspace{-5pt}
    \centering
    \includegraphics[width=\linewidth]{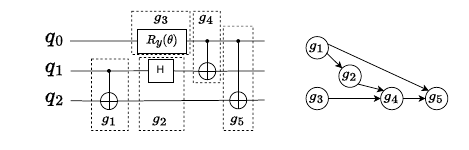}
    \vspace{-25pt}
    \caption{A quantum circuit (left) and its directed acyclic graph (right).}\vspace{-15pt}
    \label{fig:dag}
\end{figure}

\section{Preliminaries and Related Works}
We discuss some basic concepts and loosely related works to ours. To our best knowledge, there still does not exist peer methods for the scheduling problem addressed in our paper.

\textbf{Quantum Computing.}
The basic unit in QC is a qubit, which is in superposition of basis states $\ket{0}$ and $\ket{1}$: $\ket{\psi}=a\ket{0} + b\ket{1}$, where $|a|^2+|b|^2=1$. Likewise, a quantum system with $n$ qubits is in superposition of $2^n$ basis states. The evolution of quantum states can generate solutions to specific problems, perhaps much faster than classical methods. We refer readers to \cite{nielsen2010quantum} for detailed backgrounds. Quantum circuits are employed to implement quantum computation of quantum states. Each quantum circuit consists of quantum gates like X gates, RZ gates, CNOT gates, etc. To obtain the result, we have to repeat executing the circuit many times (shots), because the quantum measurement will cause the collapse of a superposition state to a basis state. A three-qubit quantum circuit is given in Fig. \ref{fig:dag} as an example. A quantum circuit can further be converted into a directed acyclic graph (DAG). The topological order of the DAG corresponds with the execution order of quantum gates (from left to right). For example, gate $g_4$ cannot be executed until gate $g_1$, $g_2$ and $g_3$ are executed in Fig. \ref{fig:dag}. 

\textbf{Quantum Processors.} The core of a quantum computer is the quantum processor, aka QPU, which serves to execute quantum circuits. We focus on superconducting quantum processors in this paper. The major properties of a QPU are its basis gates, coupling graph and noise condition. Basis gates are the quantum gates supported on the QPU. All the gates in a quantum circuit must be converted to combinations of basis gates during compiling before execution. As shown in Fig. \ref{fig:qpu}, the coupling graph restricts the connectivity of qubits. Two-qubit gates can only be deployed on connected qubits. Besides, the noise of QPUs in the NISQ era results in gate errors, measurement (readout) errors, and decoherence, which will corrupt the quantum state and reduce the fidelity. These errors change over time, so they must be calibrated regularly. Nowadays, many quantum processors are open to public through quantum cloud services. Our submitted quantum jobs will queue up to be executed. In this paper, we conduct experiments on the Qiskit \cite{aleksandrowicz2019qiskit} noise model of 16-qubit IBM Guadalupe (Fig. \ref{fig:qpu}b), and 66-qubit Xiaohong\footnote{As used in our experiments, Xiaohong is a 66-qubit superconducting quantum processor, which can be accessed via public cloud at \url{https://quantumctek-cloud.com/}. The used QCIS instruction set can be easily converted from or to the widely used QASM.} (Fig. \ref{fig:qpu}a) quantum processor from QuantumCTek \cite{quantumctek}.

\textbf{Qubit Mapping.}
When logical qubits of a quantum circuit are mapped to physical qubits on a QPU, the original two-qubit gates may violate the connectivity constraints as shown in Fig. \ref{fig:violate}. A traditional way to solve this problem is to insert SWAP gates. A SWAP gate is implemented by three CNOT gates (Fig. \ref{fig:qpu}c), incurring extra noise. Hence, the number of them is expected to be minimized. Siraichi \textit{et al.} formally introduce the aforementioned qubit allocation (mapping) problem \cite{siraichi2018qubit}, which is proved to be NP-complete. Li \textit{et al.} propose a bidirectional heuristic search (SABRE) to tackle this problem. When inserting a SWAP gate, they consider its impact on two-qubit gates in both the front layer and extended set, significantly reducing the SWAP overhead \cite{li2019tackling}. Niu \textit{et al.} take the error rate and execution time of CNOT gates into consideration, and provide BRIDGE gates as an alternative to SWAP gates \cite{niu2020hardware}. The BRIDGE gate (Fig. \ref{fig:qpu}d) is composed of four CNOT gates, but its effect equals a single CNOT gate, without changing the mapping. Niu \textit{et al.} further ameliorate the mapping method by involving the cost of inserted SWAP gates and BRIDGE gates themselves \cite{niu2023enabling}. Other methods like Reinforcement Learning (RL) \cite{huang2022reinforcement}, Monte Carlo Tree Search (MCTS) \cite{zhou2020monte,sinha2022qubit}, binary integer programming \cite{nannicini2022optimal}, and Satisfiability
Modulo Theory (SMT) \cite{murali2019noise} have also been studied in qubit mapping. We refer to \cite{ge2024quantum} for a more comprehensive survey.

\textbf{Quantum Multi-Programming (QMP).}
QMP means running multiple quantum circuits simultaneously on a QPU. This task can be decomposed into two sub-tasks, i.e. qubit partition and qubit mapping. Qubit partition allocates a unique region on the QPU to every parallel quantum circuit. Then, qubit mapping pairs logical qubits with physical qubits on the partition, and inserts SWAP gates to satisfy all two-qubit constraints. Das \textit{et al.} propose QMP on NISQ devices to improve throughput \cite{das2019case}. They allocate less noisy physical qubits to logical qubits with higher utility. Qucloud \cite{liu2021qucloud} leverages FN community detection algorithm \cite{newman2004fast} to partition QPUs, and designs an EPST score to estimate the fidelity of allocation. Different quantum circuits can be executed together only when the gap between co-located EPST and separate EPST is less than the threshold. Resch \textit{et al.} run multiple QAOA \cite{farhi2014quantum} circuits in parallel to accelerate the training process \cite{resch2021accelerating}. They greedily expand the partition by breadth-first search (BFS) based on heuristics. All the three methods \cite{das2019case,liu2021qucloud,resch2021accelerating} utilize SABRE \cite{li2019tackling} to conduct qubit mapping. Niu \textit{et al.} reorder the quantum circuits according to their CNOT density, and partition QPUs based on the connectivity and error rates of physical qubits \cite{niu2023enabling}. These existing methods either disregard the execution order or just focus on QMP in single execution. 

\section{Methodology}
In this section, we formally introduce the Quantum Job Scheduling Problem (QJSP) to excavate the importance of the execution order when multi-programming massive quantum circuits in the queue of quantum cloud services. Also, a noise-aware quantum job scheduler (NAQJS) is proposed to tackle this problem.

\subsection{The Quantum Job Scheduling Problem}
\subsubsection{Definition}
Suppose the current job queue $\mathcal{Q}$ is comprised of $\nprogram$ quantum jobs to be executed, i.e. $\mathcal{Q}=\{\mathcal{J}_1, \mathcal{J}_1, \cdots , \mathcal{J}_{\nprogram}\}$. Each job $\mathcal{J}_i$ can be represented as a tuple $(c_i, s_i, t_i)$, where $c_i$, $s_i$, and $t_i$ denote the quantum circuit, number of measurement shots, and submission time, respectively. Then, new jobs $\mathcal{J}_{\nprogram+1}, \mathcal{J}_{\nprogram+2}, \cdots$ will be submitted at time $t_{\nprogram+1}, t_{\nprogram+2}, \cdots$. For a quantum computer, besides the execution time $t_e$ of circuits on the QPU, other procedures like circuit verification, generation of control signals, and communication will cost extra time $t_m$ between execution. 

Given the coupling graph $\mathcal{G}$, noise calibration data $\mathcal{N}$, and basis gate set $\mathcal{B}$ of the QPU, we need to execute all the jobs submitted during a time period on the QPU. The objective of QJSP is to minimize the execution latency of jobs, and maintain high fidelity.

\subsubsection{Metrics}
The performance assessment of QJSP is divided into two parts: time and fidelity. Also, fairness should be considered. 

\textbf{Time.} For users, they mainly care about the time cost from submission to completion of their quantum job, which we name turnaround time. For suppliers, they emphasize on the QPU time of their quantum processors, i.e. total circuit execution time on QPU. 

\textbf{Fidelity.} The real fidelity of a quantum state is hard to obtain on a quantum computer, because recovering the complete quantum state from measurements is non-trivial. Methods like classical shadow \cite{aaronson2018shadow,huang2020predicting} can mitigate this problem but still incur additional overhead on quantum processors to achieve ideal accuracy. By convention, we use the Probability of Successful Trial (PST), which is defined as the percentage of trials producing the correct result, as our fidelity metric. This metric is widely used in NISQ applications \cite{linke2017experimental,tannu2019not,das2019case,tannu2019mitigating,liu2021qucloud,niu2023enabling} as an alternative to fidelity for its cost-efficiency.
\begin{figure}[tb!]
  \centering
  \includegraphics[width=\linewidth]{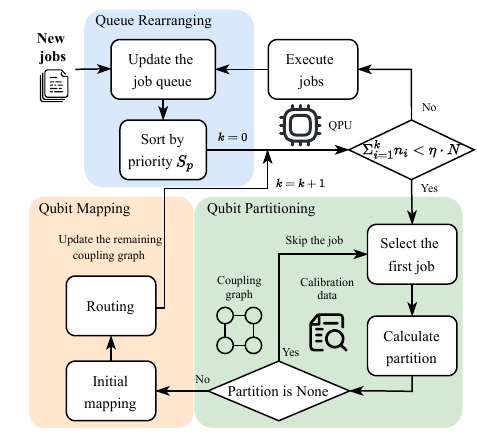}
  \caption{Overview of our noise-aware quantum job scheduler (NAQJS). }\label{fig:pipeline}
\end{figure}

\subsection{Proposed Method}
Our proposed method is composed of three parts: queue rearranging, qubit partitioning, and qubit mapping. Fig. \ref{fig:pipeline} shows the overview of NAQJS. In each iteration, we sort the current updated queue by our priority score $S_p$. Then, we evaluate quantum jobs in the sorted queue one by one, selecting and mapping those jobs whose circuits can find a partition on the remaining coupling graph, until the number of used physical qubits exceed the limit, i.e. $\Sigma_{i=1}^k n_i \leq \eta \cdot \nqpu$, where $n_i$ is the number of qubits (i.e. width) of the $i$-th selected jobs, and $k$ is the number of selected jobs. $\nqpu$ denotes the number of physical qubits, and $\eta\in (0,1]$ is the allowed maximum usage of physical qubits, which influences the average fidelity because higher usage will incur more noise. Also, $\eta$ can prevent jobs from using extremely noisy qubits. Then, the mapped jobs will be executed in parallel on the QPU. The number of shots is set as the maximum shot number among the mapped jobs to ensure all the shot requirements are satisfied, because those jobs whose shot requirements are unsatisfied will lead to extra execution overhead in following iterations. Since we can retain only the first $s_i$ outcomes, this execution pattern will not affect the result. The mapped circuits are executed in an As Late As Possible (ALAP) manner so that circuits with different depth can be measured and completed at the same time to avoid decoherence.  

\subsubsection{Queue Rearranging}
In each iteration, the job queue will be updated due to new submitted jobs and executed jobs. Akin to the importance of process scheduling for CPUs, the execution order of quantum jobs also counts in QJSP. Therefore, we sort quantum jobs in the updated queue in descending order of priority score. Three properties of a quantum job $\mathcal{J}_i$ are considered for its priority score $S_p^{(i)}$: the number of qubits $n_i$, number of shots $s_i$, and submission time $t_i$. The priority score is defined as the linear combination:
\begin{equation}\label{eq:priority}
    S_p^{(i)} = -\alpha \cdot S_n^{(i)} -\beta \cdot S_s^{(i)} - \gamma \cdot S_t^{(i)},
\end{equation}
where $\alpha$, $\beta$, and $\gamma$ ($\alpha,\beta,\gamma\geq 0$) denote the width weight, shot weight, and time weight. $S_n^{(i)}$, $S_s^{(i)}$, and $S_t^{(i)}$ are the Min-Max Normalization results of $n_i$, $s_i$, and $t_i$. For example, $S_n^{(i)}$ can be calculated as $S_n^{(i)} = (n_i - n_{min})/(n_{max}-n_{min})$.

The number of qubits $n_i$, number of shots $s_i$, and submission time $t_i$ are deemed as three most important factors for the time and fidelity metric in QJSP. The reason is as follows:

\textbf{\# Qubits.} When $n_i$ is small, the QPU can accommodate more jobs, which means more jobs are executed in unit time. This is similar to the Shortest Job First (SJF) strategy in process scheduling, which significantly raises the throughput at the beginning, thus improving the average turnaround time. 

\textbf{\# Shots.} The term $S_s^{(i)}$ narrows the distance of $s_i$ between neighboring jobs in the queue. Since the shot number is set as the maximum of $s_i$ among jobs in one execution, $S_s^{(i)}$ can decrease the number of unnecessary shots, which in turn reduces the QPU time. Also, a small $s_i$ accelerates the execution of quantum jobs at the beginning, which can benefit the turnaround time.

\textbf{Submission time.} The term $S_t^{(i)}$ prioritizes early-submit\-ted jobs, sacrificing the turnaround time and QPU time for fairness, which embodies in the maximum and standard deviation of turnaround time. 

It is worth mentioning that the circuit depth is excluded from the calculation of the priority score. For Xiaohong, the execution time of every shot is set as a constant (i.e. 0.2 ms) in reality, no matter how deep the executed circuit is. The constant execution time makes it convenient for the system to operate. Also, this time is long enough for both the execution of the deepest circuit allowed and qubit de-excitation. We have executed circuits of different depth on Xiaohong, and found that the execution time of every shot is about 0.2 ms for all circuits. Similar property is found on IBM quantum cloud empirically. Hence, the circuit depth makes no difference on latency in practical settings.

Alike process scheduling, QJSP also faces the starvation problem \cite{tanenbaum2009modern} that a quantum job waits infinitely long to run because its priority score is lower than others all the time. Starvation occurs when the number of qubits and shots of a quantum job is extremely large. The term $S_t^{(i)}$ mitigates the starvation problem but cannot avoid it. Hence, we adopt an aging strategy, i.e. raising the priority score $S_p^{(i)}$ by 1 every $\Delta t$ seconds when job $\mathcal{J}_i$ waits in the queue. 

\begin{theorem}
    With our aging strategy, all the quantum jobs can be executed in finite time in QJSP.
\end{theorem}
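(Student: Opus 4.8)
The plan is to argue by contradiction: suppose some job $\mathcal{J}_i$ waits in the queue forever. The key observation is that the priority score $S_p^{(i)}$ has a fixed, bounded "base" component coming from the normalized width, shot, and submission-time terms, while the aging strategy adds an unbounded term that grows without limit as the waiting time increases. First I would establish that the base score $-\alpha S_n^{(i)} - \beta S_s^{(i)} - \gamma S_t^{(i)}$ of any job lies in a bounded interval — indeed, since $S_n, S_s, S_t \in [0,1]$ by Min-Max Normalization, every base score lies in $[-(\alpha+\beta+\gamma), 0]$. Hence the difference between the base score of $\mathcal{J}_i$ and that of any competing job is at most $\alpha+\beta+\gamma$ in absolute value.

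Next I would quantify the effect of aging. After $\mathcal{J}_i$ has waited $\tau$ seconds, its score has been boosted by $\lfloor \tau/\Delta t \rfloor$. So for $\tau$ large enough — concretely, once $\lfloor \tau/\Delta t\rfloor > \alpha+\beta+\gamma$, i.e. after at most $(\alpha+\beta+\gamma+1)\Delta t$ seconds — the score of $\mathcal{J}_i$ strictly exceeds the score of every job that was in the queue when $\mathcal{J}_i$ was submitted (even if those competitors have not aged at all, and even accounting for their own aging, since a newly submitted competitor starts from its base score). The subtle point is that new jobs keep arriving, so I must also handle jobs submitted after $\mathcal{J}_i$. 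But any such job $\mathcal{J}_j$ submitted at time $t_j > t_i$ has waited strictly less than $\mathcal{J}_i$ at every subsequent instant, so its aging boost is no larger than that of $\mathcal{J}_i$; combined with the bounded base-score gap, once $\mathcal{J}_i$'s boost exceeds $\alpha+\beta+\gamma$ it also outranks every later-arriving job. Therefore, after a finite time, $\mathcal{J}_i$ has the (strictly) highest priority score among all jobs in the queue.

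Once $\mathcal{J}_i$ is at the front of the sorted queue, I would invoke the selection rule of NAQJS: the scheduler evaluates jobs in sorted order and selects $\mathcal{J}_i$ provided its circuit admits a partition on the full coupling graph and $n_i \le \eta\cdot\nqpu$. Here I would appeal to the implicit well-formedness assumption that every submitted job satisfies $n_i \le \eta\cdot\nqpu$ (otherwise it could never be scheduled at all, and the problem would be ill-posed); under this assumption, since $\mathcal{J}_i$ is first in line, the remaining coupling graph is the entire graph of $\nqpu$ qubits, which accommodates a width-$n_i$ circuit. Hence $\mathcal{J}_i$ is selected and executed in this iteration, contradicting the assumption that it waits forever. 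This yields a concrete bound: every job is executed within $(\alpha+\beta+\gamma+1)\Delta t$ seconds of its submission (plus the $t_e + t_m$ duration of the at most one execution round already in progress).

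The main obstacle is the bookkeeping around the continual arrival of new jobs: one must be careful that the aging of $\mathcal{J}_i$ genuinely dominates, in the limit, the aging of an arbitrary stream of later arrivals, rather than being perpetually overtaken by fresh high-width/high-shot jobs whose base scores happen to be larger. The clean resolution is the monotonicity observation above — waiting time is strictly ordered by submission order — so that no later job ever has a larger aging boost than $\mathcal{J}_i$, and the bounded base-score gap is then closed once and for all after finitely many aging increments. A secondary point to state explicitly is that an in-progress execution round lasts only a bounded time $t_e + t_m$, so "first in the sorted queue" converts to "executed" after at most one such round rather than being deferred indefinitely.
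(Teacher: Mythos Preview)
Your argument has a genuine gap at the key step where you claim that after roughly $(\alpha+\beta+\gamma+1)\Delta t$ seconds the job $\mathcal{J}_i$ has the \emph{strictly highest} priority among all jobs in the queue. This claim is false in general. Consider a job $\mathcal{J}_j$ submitted at time $t_j$ with $t_i < t_j < t_i + \Delta t$ and with a strictly larger base score than $\mathcal{J}_i$. At every later instant the two jobs have waited within $\Delta t$ of each other, so their aging boosts $\lfloor (T-t_i)/\Delta t\rfloor$ and $\lfloor (T-t_j)/\Delta t\rfloor$ differ by at most one, and are often equal. Whenever they are equal, $\mathcal{J}_j$ outranks $\mathcal{J}_i$ by virtue of its larger base score, no matter how large the common boost has grown. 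The same issue arises for jobs already in the queue when $\mathcal{J}_i$ was submitted: those have aged \emph{at least} as much as $\mathcal{J}_i$, so your parenthetical ``even accounting for their own aging'' does not save the inequality. Aging only closes the base-score gap relative to jobs that have aged \emph{strictly less}; it cannot help against contemporaries.

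The paper's proof avoids this by using a coarser partition rather than trying to put $\mathcal{J}_i$ at the very front. Set $t' = t_i + (\alpha+\beta+\gamma+1)\Delta t$. At any time $T\ge t'$, the score of $\mathcal{J}_i$ is at least $1 + \lfloor (T-t')/\Delta t\rfloor$, whereas any job submitted after $t'$ has score at most $0 + \lfloor (T-t')/\Delta t\rfloor$; hence $\mathcal{J}_i$ permanently outranks every job submitted after $t'$. The jobs that can still outrank $\mathcal{J}_i$ are therefore confined to the \emph{finite} set of jobs submitted before $t'$. Since each execution round removes at least one job from the queue, after finitely many rounds all higher-priority members of that finite set are gone and $\mathcal{J}_i$ is selected. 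Your contradiction framework and your observations about bounded base scores and the well-formedness assumption $n_i\le \eta N$ are fine; what needs to change is replacing ``$\mathcal{J}_i$ eventually leads the queue'' by ``$\mathcal{J}_i$ eventually leads among all but finitely many fixed competitors,'' and then finishing with a finiteness argument.
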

\begin{proof}
    For any job $\mathcal{J}_i$ after the Min-Max Normalization, $S_n^{(i)}$, $S_s^{(i)}$ and $S_t^{(i)}$ range from 0 to 1. Then, the priority score $S_p^{(i)} \in \left[-\alpha-\beta-\gamma, 0\right]$. When the time reaches $t^\prime \coloneqq t_i + (\alpha+\beta+\gamma+1)\Delta t$, its priority score satisfies $S_p^{(i)}\in \left[1, \alpha+\beta+\gamma+1\right]$, larger than that of any job submitted after $t^\prime$. Hence, any job submitted after $t^\prime$ will be executed later than $\mathcal{J}_i$. Since the number of jobs submitted before $t^\prime$ is finite, $\mathcal{J}_i$ will be executed in finite time.
\end{proof}
\begin{algorithm}[t]
\small
\SetAlgoLined     
\KwIn{DAG of the Circuit $G$, Number of Qubits $n$}
\KwOut{Circuit Time $t_c$}
  Initialize $t[i] = 0$, for $i = 0,1\cdots n-1$\;
  Initialize an empty queue $Q$\;
  \For{gate $g$ in $G$}{
    \If{$g.in\_degree == 0$}{
    $Q.push(g)$\;
    }
  }
  \While{$Q$ is not empty}{
    $g = Q.top()$\;
    \If{$g$ is one-qubit gate}{
    $q_1 = g.qubits$\;
    $t[q_1] = t[q_1] + g.duration$\;
    }
    \If{$g$ is two-qubit gate}{
    $q_1, q_2 = g.qubits$\;
    $t[q_1]=t[q_2]=max(t[q_1],t[q_2])+g.duration$\;
    }
    \For{gate $g^\prime$ in $g.successors$}{
        $g^\prime.in\_degree = g^\prime.in\_degree - 1$\;
        \If{$g^\prime.in\_degree == 0$}{
        $Q.push(g^\prime)$\;
        }
    }
    $Q.pop()$\;
    }
    $t_c = max(t[i])$\;

\caption{Calculation of circuit time}
\label{alg:circuit_time}
\end{algorithm}

\subsubsection{Qubit Partitioning}
After rearranging the updated queue, we need to select a number of jobs to be executed in parallel in this iteration. Selected jobs must share no common physical qubits with each other, so we should partition the coupling graph into separate parts. Concretely, we pick out the first job in the sorted queue to conduct qubit partitioning. If the partitioning algorithm cannot find a valid partition, it will skip the job, and proceed to the next. Otherwise, we will go on to the qubit mapping step. We use the qubit partitioning algorithm introduced in \cite{niu2023enabling}, because it considers both the noise and topology of the QPU, and substantially reduces the search space by limiting the starting points. 

This method chooses physical qubits with higher degrees than the largest logical degree as starting points. If such qubits do not exist, it chooses physical qubits with the highest degree as starting points. Then, it adds a neighboring qubit with the highest fidelity degree to the partition iteratively until the number of selected physical qubits equals that of logical qubits. The fidelity degree $D_f^{(i)}$ is calculated as $D_f^{(i)} = 2\times\sum\nolimits_{j\in N(i)} r_{2q}^{(i,j)} + r_{ro}^{(i)}$, where $N(i)$ denotes the neighboring qubits of $Q_i$. $r_{2q}^{(i,j)}$ and $r_{ro}^{(i)}$ are the reliability of two-qubit gates on edge $(Q_i, Q_j)$ and measurements (readout) on $Q_i$. Finally, the partition with the best fidelity score is selected. The fidelity score $S_f$ is derived from Eq.\eqref{eq:sf}: 
\begin{equation}\label{eq:sf}
    S_f = -N_{2q}\times (1-\overline{r}_{2q}) - N_{ro}\times (1-\overline{r}_{ro}),
\end{equation}
where $\overline{r}_{2q}$ and $\overline{r}_{ro}$ are average reliability of two-qubit gates and measurements in this partition. $N_{2q}$ and $N_{ro}$ is the number of the two operations. 

\subsubsection{Qubit Mapping}
This step is further divided into two sub-tasks: initial mapping and routing. Initial mapping is to determine the initial one-to-one correspondence between logical and physical qubits. \cite{siraichi2018qubit} shows that initial mapping can affect the final circuit quality. However, initial mapping alone cannot ensure the applicability of all the two-qubit gates. Then, routing solves the constraints of these two-qubit gates one by one. Finally, used physical qubits in the mapping are removed from the remaining coupling graph. An example of qubit mapping is given in Fig. \ref{fig:qmap}. The CNOT gate in red cannot be applied on the subgraph derived by qubit partitioning. Two solutions are provided. One is to insert a SWAP gate to exchange the state of logical qubits $q_0$ and $q_1$, so the CNOT gate should act on physical qubits $Q_1$ and $Q_2$ (Fig. \ref{fig:qmap_swap}). Another is to use BRIDGE gate to connect $Q_0$ and $Q_2$ via an intermediary qubit $Q_1$ (Fig. \ref{fig:qmap_bridge}). The difference is that SWAP gates will change the logical-to-physical mapping while BRDIGE gates keep it unchanged.

\textbf{Initial mapping}. \cite{li2019tackling} proposes a reverse traversal technique to refine initial mapping. A quantum circuit can be easily reversed, retaining the same connectivity constraints as the original one. Therefore, we can exploit the final mapping of the reverse circuit as the new initial mapping of the original circuit to improve the mapping result. Nevertheless, this method overlooks the impact of varying noise among qubits. \cite{liu2021qucloud} designs the $EPST$ score to estimate the probability of a successful trial under noise, but the score is calculated from the average reliability of gates and measurements, which may deviate from reality. Then we define the $EPST^*$ score:
\begin{equation}\label{eq:epst*}
    EPST^* = \prod_{i=1}^{N_{1q}}r_{1q}^{(o_i)}\cdot\prod_{i=1}^{N_{2q}}r_{2q}^{(d_i)}\cdot\prod_{i=1}^{N_{ro}}r_{ro}^{(m_i)}\cdot\prod_{i=1}^{n}r_{a}^{(i)}\cdot\prod_{i=1}^{n}r_{p}^{(i)},
\end{equation}
where $r_{1q}^{(*)}$, $r_{2q}^{(*)}$, and $r_{ro}^{(*)}$ denote the reliability of one-qubit gates, two-qubit gates, and measurements. $N_{1q}$, $N_{2q}$, and $N_{ro}$ is the number of the three operations. $o_i$, $d_i$, and $m_i$ map operations to their locations in the circuit. Since our $EPST^*$ score considers each gate's reliability separately, it is more accurate than $EPST$, especially given high variance of reliability. Moreover, $r_{a}^{(i)}$ and $r_{p}^{(i)}$, the probability of amplitude damping error and phase damping error not occurring on the i-th qubit, are involved in $EPST^*$ to perceive the impact of decoherence. They can be calculated as: $r_{a}^{(i)} = \exp{(-t_c/T_1^{(i)})}$, $r_{p}^{(i)} = \exp{(-t_c/T_\phi^{(i)})}$, $T_\phi^{(i)} = T_1^{(i)}T_2^{(i)}/(2T_1^{(i)}-T_2^{(i)})$. $T_1^{(i)}$ and $T_2^{(i)}$ represent the relaxation time and dephasing time of the i-th qubit. The circuit time $t_c$ can be calculated by traversing the DAG of a circuit in the topological order as described in Alg. \ref{alg:circuit_time}.
We integrate $EPST^*$ in our noise-aware initial mapping algorithm in Alg. \ref{alg:im}.

\textbf{Routing}. To be compatible with initial mapping, the routing method should also take noise into account. We use the routing method introduced in \cite{niu2023enabling}. This method improves SABRE routing \cite{li2019tackling} in the following aspects: (1) adding BRIDGE gates as an alternative to SWAP gates, (2) considering the noise of two-qubit gates in the distance matrix, and (3) noticing the impact of inserted SWAP gates and BRIDGE gates themselves.

\subsection{Complexity Analysis}
Given the number of jobs $\nprogram$, the number of gates $g$, the number of physical qubits $\nqpu$, the number of logical qubits $\nqubit$ ($\nqubit<\nqpu$), the number of starting points $m$ ($m<\nqpu$), and the number of repeats $r$, we can calculate the time complexity of our method. The complexity of qubit partitioning and routing is $O\big(m\nqubit^2+\nqpu\log (\nqpu)+g\big)$ and $O(g\nqpu^{2.5})$, respectively \cite{niu2023enabling}.

\textbf{Queue rearranging.} Calculating the priority score takes $O(\nprogram)$ time. The main overhead of queue rearranging lies in sorting the queue, which takes $O\big(\nprogram\log (\nprogram)\big)$ time. Hence, the complexity of queue rearranging is $O\big(\nprogram\log (\nprogram)\big)$.

\textbf{Initial mapping.} The random permutation step takes $O(\nqubit)$ time. In each loop, the routing method takes $O\big(m\nqubit^2+\nqpu\log (\nqpu)+g\big)$, and calculation of $EPST^*$ takes $O(g+\nqubit)$. Hence, the complexity of each loop is $O(rg+r\nqubit+rg\nqpu^{2.5})$. The total complexity of initial mapping can be truncated to $O(rg\nqpu^{2.5}+r\nqubit)$.

Since every job should undergo qubit partitioning, initial mapping, and routing, the total complexity is $O(\nprogram rg\nqpu^{2.5}+\nprogram r\nqubit+\nprogram m\nqubit^2)$. Therefore, the overall time complexity is $O\big(\nprogram\log (\nprogram)+\nprogram rg\nqpu^{2.5}+\nprogram r\nqubit+\nprogram m\nqubit^2\big)$. In normal circumstances, the number of repeats $r$ is a small constant and we have $g>m$, so the complexity can be reduced to $O\big(\nprogram\log (\nprogram)+\nprogram g\nqpu^{2.5}\big)$. The routing overhead $O(\nprogram g\nqpu^{2.5})$ is the dominant part, which is unavoidable. Queue rearranging only incurs trivial overhead compared with routing.
\begin{algorithm}[t]
\small
\SetAlgoLined
 
\KwIn{Partition $P$, Routing Method $Routing()$, Repeat Time $R$, Circuit $\mathcal{C}$, Coupling Graph $\mathcal{G}$, Noise Calibration Data $\mathcal{N}$}
\KwOut{Initial Mapping $Best\_initial\_mapping$}
 $Initial\_mapping = Random\_Permutation(P)$\;
 $Best\_score = 0$\;
 $Best\_initial\_mapping = Initial\_mapping$\;
  \For{$i=1$ to R}{
    $\_, Final\_mapping = Routing(\mathcal{C}, \mathcal{G}, Initial\_mapping, \mathcal{N})$\;
    $\_, Initial\_mapping = Routing(\mathcal{C}, \mathcal{G}, Final\_mapping, \mathcal{N})$\;
    $Routed\_circuit, \_ = Routing(\mathcal{C}, \mathcal{G}, Initial\_mapping, \mathcal{N})$\;
    $Score = EPST^*(Routed\_circuit, Initial\_mapping, \mathcal{N})$\;
    \If{$Score > Best\_score$}{
    $Best\_score = Score$\;
    $Best\_initial\_mapping = Initial\_mapping$}
  }

\caption{Noise-aware Initial Mapping}
\label{alg:im}
\end{algorithm}

\section{Experiments}
\subsection{Protocols}

\textbf{Dataset.} We construct our dataset from RevLib\footnote{RevLib can be accessed at \url{https://www.revlib.org/}. It contains quantum circuits realizing specific gates like a Toffoli gate, arithmetic functions like a 1-bit adder, etc.} \cite{soeken2012revkit}, a benchmark of reversible and quantum circuits, which is widely used in related works \cite{li2019tackling,liu2021qucloud,niu2023enabling}. Circuits with extremely large width or depth are unsuitable for the noise model and real quantum hardware, so we filter the data. First, we choose circuits with width no more than 16 to be suitable for the 16-qubit noise model. Second, we translate circuits to fit the basis gate set of the noise model and Xiaohong. Third, we choose translated circuits with depth smaller than 100 to guarantee relatively high fidelity. Finally, the number of candidate circuits on the noise model and Xiaohong is 77 and 20, respectively. The statistics on our dataset is listed in Tbl. \ref{tab:dataset}.

\begin{table}[ht]
\vspace{-5pt}
\caption{Statistics on our dataset.}
\vspace{-10pt}
\resizebox{\linewidth}{!}{
\begin{tabular}{c|cl|cl|ll|ll}
\hline
\multirow{2}{*}{Environment} & \multicolumn{2}{c|}{$N_{2q}$}       & \multicolumn{2}{c|}{\#Gates (g)}        & \multicolumn{2}{c|}{Depth} & \multicolumn{2}{c}{Width} \\
                             & \multicolumn{1}{l}{range} & avg  & \multicolumn{1}{l}{range} & avg  & range           & avg      & range           & avg     \\ \hline
Noise Model                  & {[}5, 156{]}              & 26.2 & {[}7, 391{]}              & 70.8 & {[}5, 99{]}     & 44.1     & {[}3, 16{]}     & 6.0     \\
Xiaohong                     & {[}5, 22{]}               & 13.6 & {[}19, 111{]}               & 66.6 & {[}15, 68{]}     & 44.1     & {[}3, 16{]}     & 6.7     \\ \hline
\end{tabular}\label{tab:dataset}
}
\vspace{-8pt}
\end{table}

Then, we sample from candidate circuits to construct our dataset. We focus on the congestion scene, where there are some initial jobs and much more jobs to be submitted. 
Due to the limitation of quantum resources and time, the number of initial jobs is 44 on average, and the number of new submitted jobs is 400. 
The submission time $t_i$ of initial jobs is set as 0. For new submitted jobs, $t_{i+1} \in \{t_i, t_i + 1\}$. According to our observation, at peak periods on IBM quantum cloud, there are approximately two jobs submitted per second on average. Hence, the ratio is in line with the congestion in reality. 
For the noise model, the number of shots $s_i$ in each job ($c_i,s_i,t_i$) is set as a random integer from 1K to 20K. For Xiaohong (QuantumCTek), we modify the range of $s_i$ as $\left[500, 10K\right]$ to reduce running overhead. The length of dataset is 10.
\begin{table*}[t!]
\centering
\caption{Performance comparison between different methods (with best in bold and second best underlined.}
\label{tab:all}
\centering
\resizebox{\linewidth}{!}{
\begin{tabular}{l|c|c|c|ccccc|c|c|c}
\hline
\multirow{2}{*}{Environment} &\multirow{2}{*}{Method}  &\multirow{2}{*}{QPU Time[s]$\downarrow$} &\multirow{2}{*}{$\Delta$ QPU Time(\%)$\downarrow$}& \multicolumn{5}{c|}{TAT[s]}            & \multirow{2}{*}{RT[s]$\downarrow$} & \multirow{2}{*}{TRF$\uparrow$}  & \multirow{2}{*}{PST[\%]$\uparrow$} \\ 
   &     &              &      & max$\downarrow$  & avg$\downarrow$  &$\Delta$ avg(\%)$\downarrow$ & std$\downarrow$  &$\Delta$ std(\%)$\downarrow$ &       &      &     \\ \hline
\multirow{5}{*}{\begin{tabular}[c]{@{}c@{}}Noise Model\\ (Guadalupe)\end{tabular}} 
& FIFO       & 925.94                  & 0                 & 5155          & 2591               & 0                & 1484            & 0                & 197    & 1         & \textbf{72.88}    \\
& FIFO-p     & \underline{502.90}                  & \underline{-45.69}            & \underline{2323}           & \underline{1070}               & \underline{-58.71}          & \underline{610}             & \underline{-58.88}           & 208            & 2.43      & 70.06    \\
& NAQJS$^\dagger$       & \textbf{443.76}         & \textbf{-50.07}   & \textbf{2280}  & \textbf{802}       & \textbf{-69.03} & \textbf{585}    & \textbf{-60.61}  & 202           & 2.41      & 69.65    \\
& QuMC       & 734.68                  & -20.66            & 3819          & 2021               & -21.99           & 1046             & -29.52           & 296           & 1.41      & \underline{72.87}    \\
& QuCloud    & 601.62                  & -35.03            & 2897          & 1335               & -48.47           & 785             & -47.11           & 733            & 1.89      & 71.10  \\ \hline
\multirow{5}{*}{\begin{tabular}[c]{@{}c@{}}Noise Model\\ (Chain)\end{tabular}} 
& FIFO      & 925.94            & 0                 & 5155          & 2591          & 0                 & 1484          & 0                & -          & 1               & 69.55    \\
& FIFO-p    & \underline{529.32}            & \underline{-42.83}           & \textbf{2273}   & \underline{1073}          & \underline{-58.57}            & \textbf{632} & \textbf{-57.39}    & -          & 2.30            & 67.68    \\
& NAQJS$^\dagger$      & \textbf{516.40}   & \textbf{-44.23}   & \underline{2413}          & \textbf{876}  & \textbf{-66.21}   & \underline{663}           & \underline{-55.35}            & - & 2.23            & 67.98    \\
& QuMC      & 683.17            & -26.22           & 3405          & 1611          & -37.84            & 936           & -36.90           & -         & 1.60            & \underline{69.66}    \\
& QuCloud   & 721.09            & -22.12           & 3688          & 1873          & -27.70            & 1052           & -29.14           & -          & 1.46            & \textbf{69.90}  \\ \hline
\multirow{6}{*}{Xiaohong} & FIFO & 468.31       & 0                 & 4733          & 2377              & 0       & 1361 & 0       & 110    & 1            & \textbf{45.86}   \\
& FIFO-p    & \textbf{95.89}       & \textbf{-79.52}            & \textbf{455} & \underline{217}  & \underline{-90.88}  & \textbf{118}  & \textbf{-91.33}  & 112    & 8.27 & 32.31   \\
& NAQJS$^\dagger$    & \underline{96.75}    & \underline{-79.34}            & \underline{688} & \textbf{156}  & \textbf{-93.45}  & \underline{176}  & \underline{-87.09}  & 133    & 6.62 & 35.70   \\
& NAQJS$^\dagger$ ($\eta=2/7$)    & 216.21    & -53.83            & 1826 & 594  & -74.99  & 484  & -64.41  & 125    & 2.53 & 42.48   \\
& QuMC    & 270.78      & -42.18 & 2177 & 917  & -61.43  & 627  & -53.90  & 515    & 2.31 & 43.33   \\
& QuCloud    & 372.49       & -20.46            & 3637 & 2025 & -14.82  & 1191 & -12.45  & 1526   & 1.34 & \underline{43.97}   \\ \hline
\end{tabular}
}
    \begin{scriptsize}
    \begin{tablenotes}
\item 
\textbf{$\dagger$:} our proposed method.
\textbf{$\Delta$ QPU Time(\%):} percentage difference to the QPU time of FIFO. 
\textbf{TAT[s]:} turnaround time in seconds. 
\textbf{RT[s]:} runtime of scheduling algorithms in seconds. 
\textbf{$\Delta$ avg(\%):} percentage difference to the average of FIFO.
\textbf{$\Delta$ std(\%):} percentage difference to the standard deviation of FIFO. 
\textbf{TRF:} Trial Reduction Factor \cite{das2019case}.
 \end{tablenotes}
 \end{scriptsize}
\vspace{-10pt}
\end{table*}

\textbf{Baselines.} As there are few existing methods addressing the scheduling problem proposed in this paper, we devise four baselines (i.e. FIFO, FIFO-p, QuMC, QuCloud) to verify the effectiveness of NAQJS. First-In-First-Out (FIFO) denotes the current running mode of quantum computers. Specifically, each submitted quantum job is executed in serial according to their submission time. FIFO-p represents that all quantum jobs are executed in parallel according to their submission time. In each execution round, quantum jobs will be allocated on the quantum processor in chronological order until the next job cannot be accommodated. The partition and mapping methods of FIFO and FIFO-p are the same as NAQJS. Additionally, we adapt QuMC \cite{niu2023enabling} and QuCloud \cite{liu2021qucloud} to fit QJSP by merging their queuing method into our framework. For baselines, we do not explicitly restrain the maximum usage of physical qubits.

\textbf{Parameter Setting and Experiment Environment}. Experiments are performed on the IBM Guadalupe noise model and its chain version (discussed in Sec. \ref{sec:topo}) as simulation, and the physical Xiaohong (QuantumCTek) quantum processor. For Xiaohong, the average relaxation time $\overline{T}_1$ and $\overline{T}_2$ are 27.35 $\mu s$ and 20 $\mu s$. The average reliability of one-qubit gate $\overline{r}_{1q}$, two-qubit gate $\overline{r}_{2q}$, and measurement $\overline{r}_{ro}$ are 99.85\%, 97.07\%, and 93.97\%.

For noise models, we set $\alpha=6$, $\beta=4.5$, $\gamma=1$, $\eta=5/6$, $\Delta t=360$. For Xiaohong, we set $\alpha=6$, $\beta=3$, $\gamma=1$, $\eta=5/6$, $\Delta t=360$. According to expert knowledge and our practical tests on quantum processors, we set the time cost of every shot as 200$\mu s$, and extra time between execution as 10$s$.

\subsection{Results on the Noise Model Guadalupe}

As shown in Tbl. \ref{tab:all}, our method NAQJS achieves the shortest average turnaround time (TAT) across all methods. It reduces TAT of FIFO by nearly 70 percent, which will significantly cut down the waiting time for users to obtain their results. Also, the standard deviation of TAT of NAQJS is the smallest, having a reduction of 60.61\% over FIFO. Small standard deviation means TAT of different users will not differ too much, which showcases the fairness of NAQJS. The maximum TAT of NAQJS ranks second (only a bit longer than FIFO-p), indicating that no job will wait too long to be executed, further strengthening the fairness of our method. Besides, our QPU time is the shortest among all the five methods, which attains 50.07\% reduction over FIFO. The PST reduction of NAQJS is only 3.23\%. In other words, NAQJS can achieve significant improvements in QPU time and TAT at a trivial cost of fidelity. 

Compared with other methods, NAQJS is superior in QPU time and TAT. Though QuMC can ensure high PST, the QPU time and TAT are about twice longer than NAQJS. The improvements of QuMC in time metrics over FIFO is rather limited. 

\subsection{Results on Xiaohong Quantum Processor}
As shown in Tbl. \ref{tab:all}, we still achieve the shortest average TAT, significantly decreasing TAT of FIFO by 93.45\%. Among all methods, NAQJS has the second lowest QPU time (79.34\% reduction over FIFO) and standard deviation of TAT (87.09\% reduction over FIFO). Hence, NAQJS can significantly reduce time overhead for both users and suppliers, and meanwhile ensure enough fairness. 

The superiority of NAQJS on Xiaohong owes to the large TRF (6.62) \cite{das2019case} and our queue rearranging. TRF is the ratio of the number of trials when circuits run in series to that when circuits run in parallel. With 66 physical qubits, Xiaohong can accommodate more jobs in each execution. Hence, the execution times are largely diminished compared to FIFO, resulting in shorter QPU time and TAT. In addition, our priority score can perceive the potential influence of each job on time metrics, arranging those highly influential jobs to the head of the queue. Hence, QPU time and TAT are further reduced. Though FIFO-p slightly outperforms NAQJS in QPU time by 0.18 percent due to larger TRF (8.27), its average TAT is 2.57 percent larger than ours, and its PST is 3.39 percent lower than us. Considering all these metrics, NAQJS performs the best in general.

PST on Xiaohong is much lower, because noise on Xiaohong is more severe than on the noise model. NAQJS has 10.16\% reduction in PST over FIFO while QuMC and QuCloud keep relative high PST (43.33\% and 43.97\%). However, their TRF is only 2.31 and 1.34. Hence, their QPU Time and average TAT are more than twice longer than ours. We further validate that NAQJS surpasses QuMC and QuCloud in time metrics by a large gap even when PST is close. When we set the maximum usage $\eta$ as 2/7 in Tbl. \ref{tab:all}, the PST (42.48\%) is almost the same as QuMC and QuCloud, but the QPU time and average TAT are still much lower than theirs (over 10\%). 

\subsection{Runtime Analysis}
As shown in Tbl. \ref{tab:all}, the runtime (RT) of NAQJS is close to FIFO. As routing occupies most of the runtime, NAQJS will not introduce much additional overhead. QuMC and QuCloud cost much more time than us, especially on Xiaohong. QuMC and QuCloud will repeat routing if a job is unsuitable for their strategy. The runtime of them increases dramatically with the growth in the number of physcal qubits, while NAQJS avoids this issue, indicating the scalability. Note that NAQJS can run in a pipelining manner with the circuit execution, and the QPU time plus the total extra time (2279.76 s for Guadalupe and 780.09 s for Xiaohong) is far longer than our runtime. Hence, it will not affect the running of QPU. 

\subsection{Impact of QPU Topology}\label{sec:topo}
Quantum processors may have different topology, leading to distinct connectivity of qubits. The topology of the noise model Guad\-alupe is ring-type shown in Fig. \ref{fig:qpu}b. To explore the impact of QPU Topology on QJSP metrics, we disconnect $Q_1$ from $Q_4$ in Guadalupe and obtain chain-type topology depicted in Fig. \ref{fig:chain}. The noise information of every qubit remains unchanged. The results are listed in Tbl. \ref{tab:all}. The connectivity of chain-type topology (Chain) is worse than ring-type topology (Guadalupe), so the performance of all the methods deteriorates. NAQJS still outperforms others in both QPU time and average TAT. Besides, the PST gap between our method and FIFO gets smaller.
\begin{figure}[htbp]
    \centering
    \includegraphics[width=\linewidth]{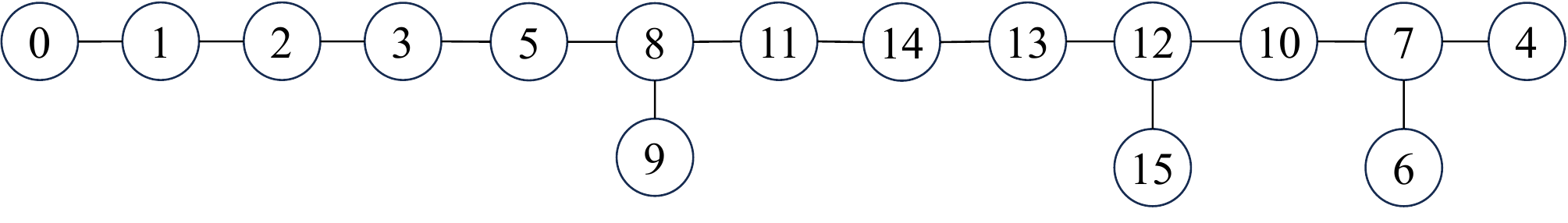}
    \caption{Chain-type topology.}
    \label{fig:chain}
\vspace{-12pt}
\end{figure}
\begin{figure*}[t]
\centering
	\begin{subfigure}[b]{0.33\linewidth}
	\includegraphics[width=\linewidth]{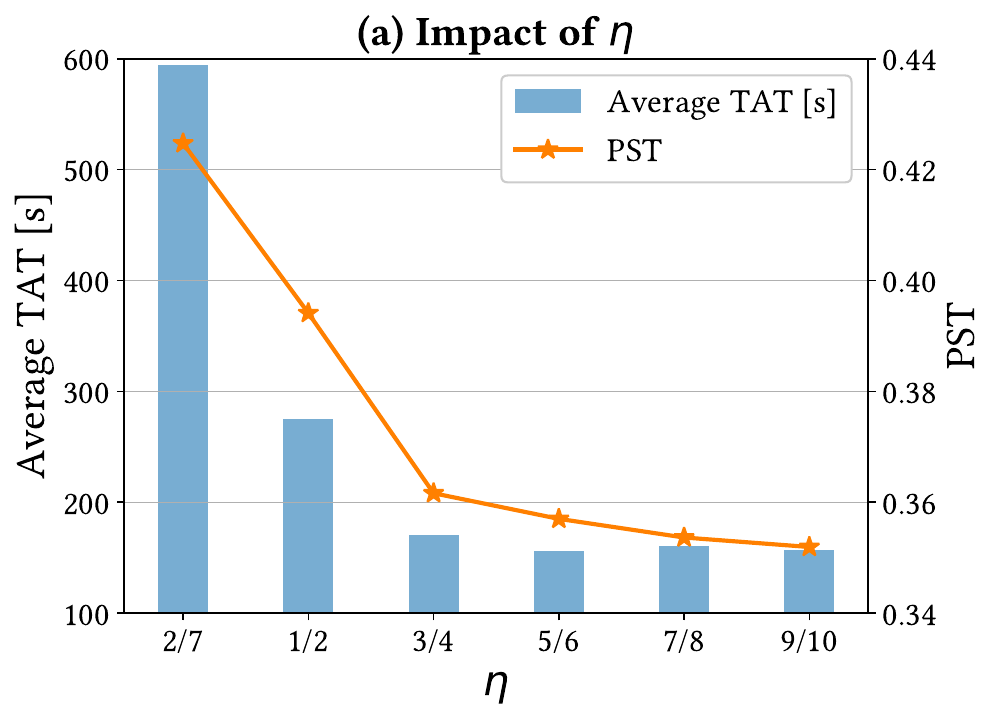}
\label{fig:eta}
	\end{subfigure}
	\begin{subfigure}[b]{0.33\linewidth}
    \includegraphics[width=\linewidth]{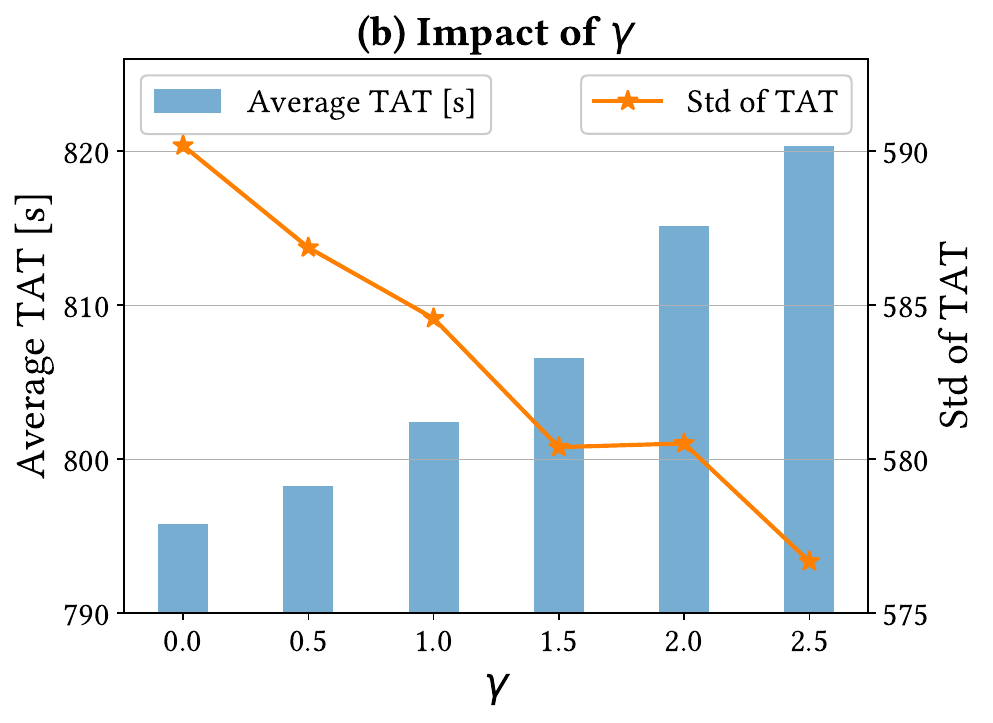}
\label{fig:gamma}
\end{subfigure}  
	\begin{subfigure}[b]{0.33\linewidth}
	\includegraphics[width=\linewidth]{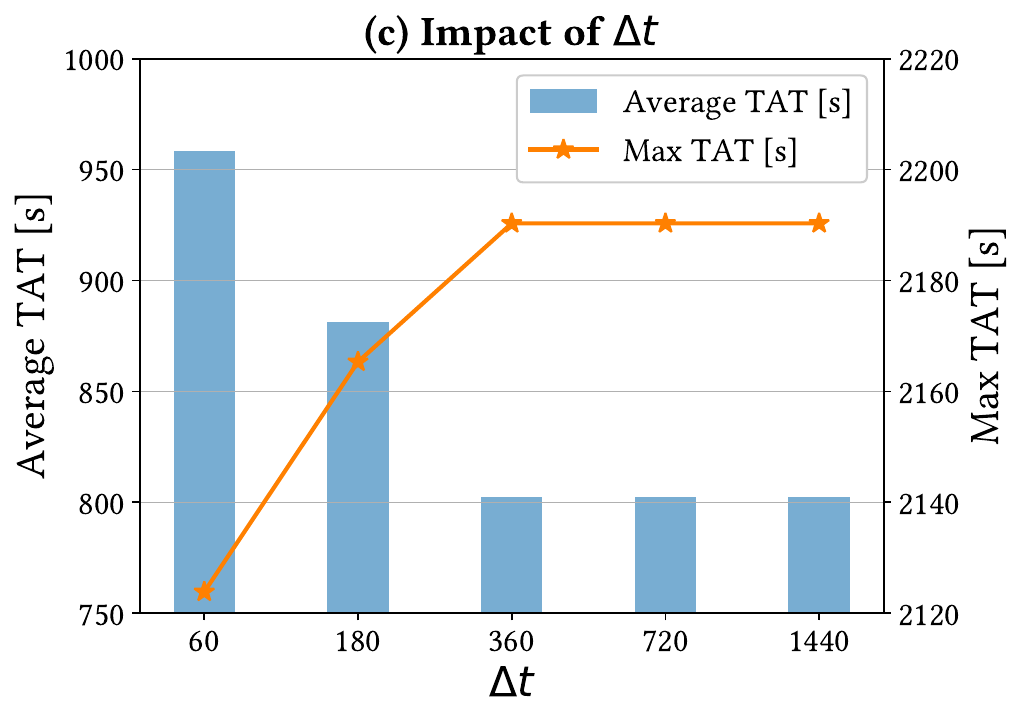}
\label{fig:deltat}
	\end{subfigure}
\vspace{-30pt}
\caption{(a) Impact of $\eta$ on average TAT and PST on Xiaohong. (b) Impact of $\gamma$ on average TAT and standard deviation of TAT on the noise model. (c) Impact of $\Delta t$ on average TAT and maximum TAT on the noise model.}\label{fig:sensitivity}
\vspace{-8pt}
\end{figure*}
\begin{figure*}[htbp]
\centering
	\begin{subfigure}[b]{0.31\linewidth}
	\includegraphics[width=\linewidth]{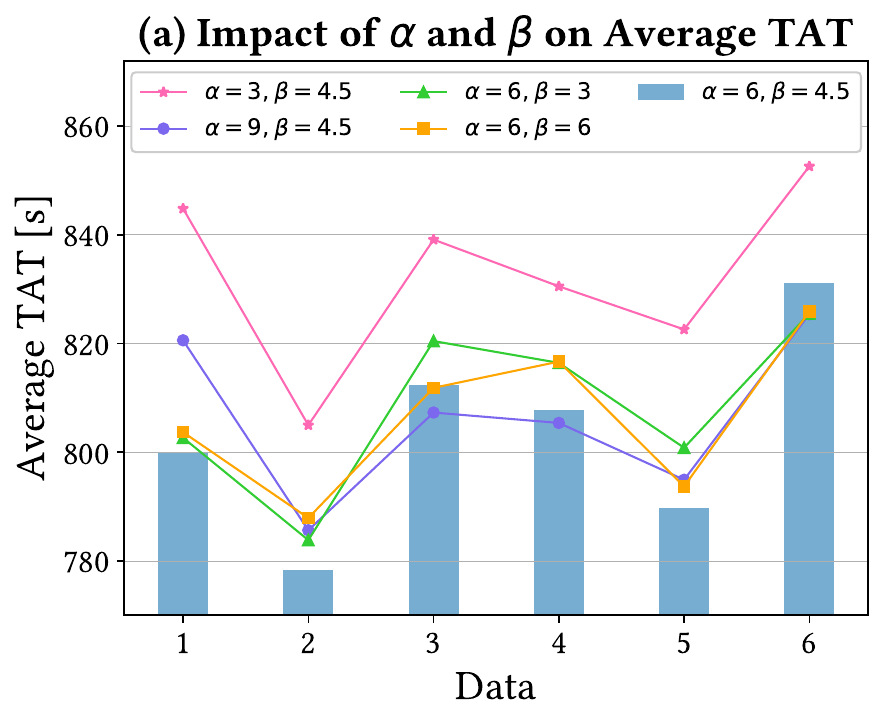}
	\end{subfigure}
	\begin{subfigure}[b]{0.31\linewidth}
    \includegraphics[width=\linewidth]{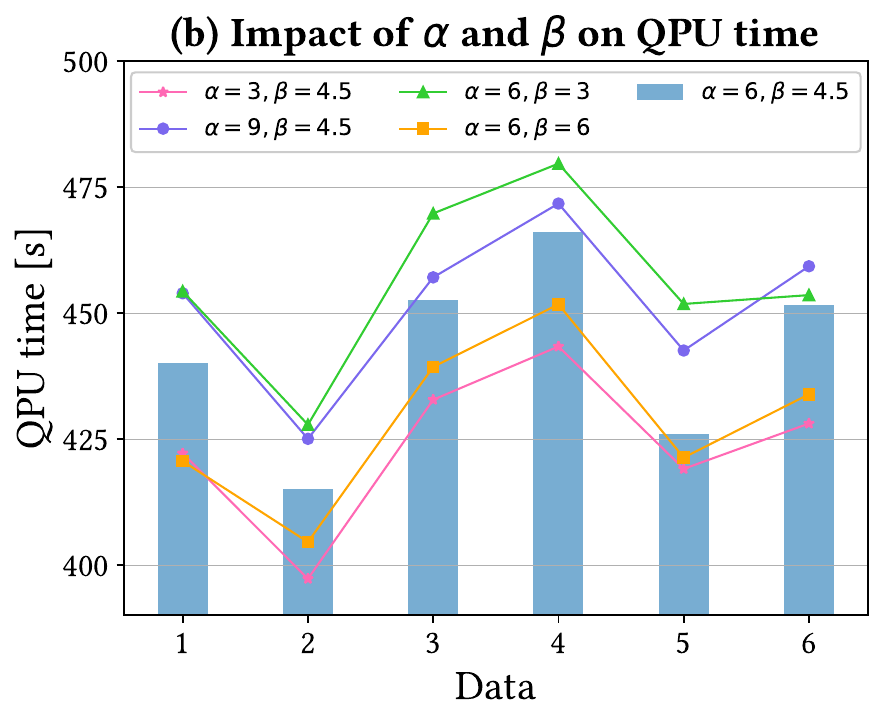}
\end{subfigure} 
	\begin{subfigure}[b]{0.35\linewidth}
    \includegraphics[width=\linewidth]{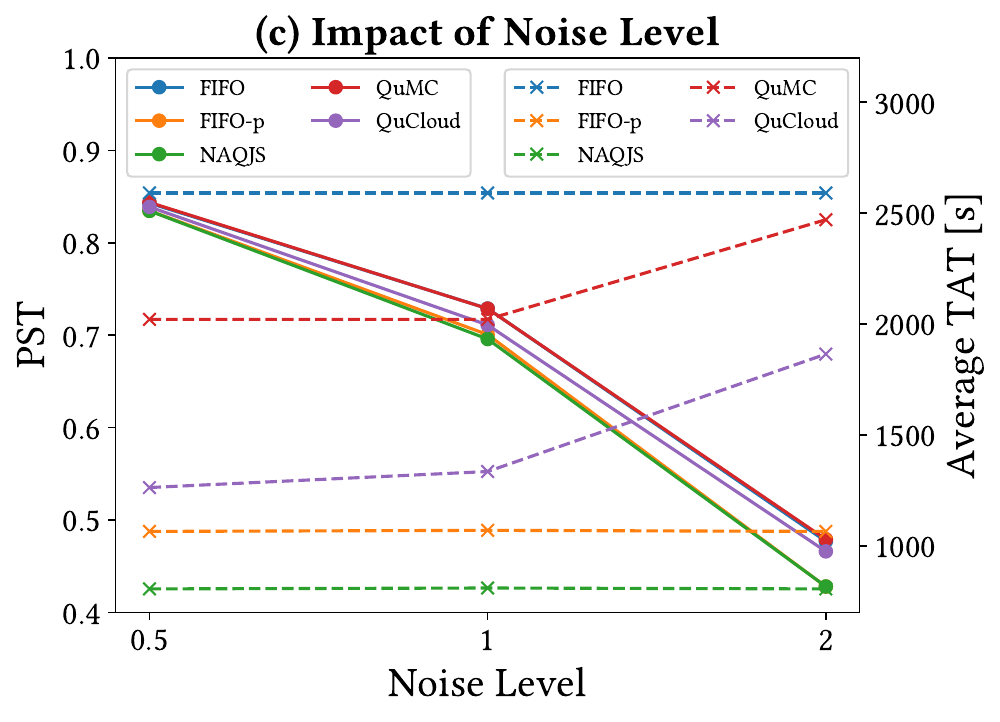}
\end{subfigure}  
\vspace{-12pt}
\caption{(a) Impact of $\alpha$ and $\beta$ on average TAT on the noise model. (b) Impact of $\alpha$ and $\beta$ on QPU time on the noise model. (c) Impact of noise level on PST (solid lines) and average TAT (dashed lines) on the noise model.}\label{fig:alpha_beta}
\vspace{-8pt}
\end{figure*}

\subsection{Impact of Noise Level}
The noise level can affect the metrics in QJSP, especially PST. To investigate this, we multiply the noise of one-qubit gates, two-qubit gates and measurements by the noise level (0.5, 1, or 2). As shown in Fig. \ref{fig:alpha_beta}c, PST of all the methods drops drastically with the increase of the noise level. QuMC and QuCloud sacrifice average TAT for PST when noise level is large. QuMC almost degrades to the serial running mode (one job per execution) when noise level reaches 2. By contrast, average TAT of NAQJS is insensitive to the noise level. Therefore, NAQJS can keep low average TAT and adequately high PST even when the noise condition is poor.

More importantly, the PST gap between NAQJS and other methods narrows when noise level decreases. Researchers are currently devoted to manufacturing larger-scale quantum processors and fabricating noiseless logical qubits. As a result, users will pay more and more attention to time metrics in the future. The advantage in time metrics of NAQJS will be further amplified with the increase in qubit number and decrease in noise.

\subsection{Sensitivity Analysis of Hyperparameters}\label{sec:sensitivity}
We study the impact of five hyperparameters: width weight $\alpha$, shot weight $\beta$, time weight $\gamma$, maximum usage $\eta$, and time interval $\Delta t$. The sensitivity tests for $\alpha$, $\beta$, $\gamma$, and $\Delta t$ are implemented on the noise model; the test for $\eta$ is done on Xiaohong, as $\eta$ is more related with PST, and PST on Xiaohong reflects the realistic effect on QPUs.

\textbf{Maximum usage $\eta$}. $\eta$ directly influences TRF and the qubit utilization rate. With its increase, both average TAT and PST decline as shown in Fig. \ref{fig:sensitivity}a, because more jobs are executed in parallel. Hence, we can balance the fidelity and execution latency by tuning $\eta$ and we set $\eta=5/6$. 

\textbf{Time weight $\gamma$.} As shown in Fig. \ref{fig:sensitivity}b, with the increase of $\gamma$, the standard deviation of TAT is getting smaller while average TAT is getting longer. Hence, there is a trade-off between fairness and time metrics. Since time metrics are more important for both users and suppliers, we pick $\gamma=1.0$ in our experiments.

\textbf{Width weight $\alpha$ and shot weight $\beta$.} $\alpha$ and $\beta$ are more concerned with TAT and QPU time. As shown in Fig. \ref{fig:alpha_beta}a and Fig. \ref{fig:alpha_beta}b, the increase in $\alpha$ leads to decline in average TAT and rise in QPU time. By contrast, larger $\beta$ reduces QPU time, and average TAT reaches its minimum when $\beta=4.5$. To strike a balance between average TAT and QPU time, we pick $\alpha=6, \beta=4.5$ for the noise model.

\textbf{Time interval $\Delta t$.} We introduce $\Delta t$ to avoid the starvation problem. $\Delta t$ should be big enough, or it will result in FIFO-p manner execution. As shown in Fig. \ref{fig:sensitivity}c, with the increase of $\Delta t$, average TAT decreases while maximum TAT rises, indicating that quantum jobs with large width or shot numbers waits longer to be executed. Nevertheless, the average TAT and maximum TAT saturates at $\Delta t=360$. Therefore, we set $\Delta t=360$ to avoid the starvation problem and guarantee low average TAT.

\subsection{Ablation Study of EPST*}
To evaluate our $EPST^*$ score, we run NAQJS with and without $EPST^*$. For the noise model, $EPST^*$ raises PST by 0.70\% (from 68.95\% to 69.65\%). For Xiaohong, it raises PST by 1.77\% (from 33.93\% to 35.70\%). Hence, $EPST^*$ can improve the fidelity of quantum circuits.

\section{Conclusion and Future Work}
We have formulated the Quantum Job Scheduling Problem (QJSP) and proposed a noise-aware quantum job scheduler (NAQJS) to boost the execution efficiency of (superconducting) quantum processors. Our scheduling method perceives the impact of different jobs on time metrics through our priority score, and the noise-aware initial mapping improves the fidelity. Results show that NAQJS outperforms all the baselines in both QPU time and TAT. Besides, the fidelity and fairness are also guaranteed. The small runtime overhead shows its scalability on large QPUs. QJSP may be more important when noise-free quantum processors emerges, because it will directly influence the efficiency of them.

\textbf{Future Work.} In this paper, we conduct experiments on a 66-qubit quantum processor, which is the largest in scale compared with related works. Since larger-scale QPUs are not yet open to the public or too expensive to use, we leave larger-scale experiments for our future work. Besides, we envision that NAQJS may be further adapted to non-superconducting quantum cloud.


\bibliographystyle{ACM-Reference-Format}
\bibliography{main}

\end{document}